\newtheorem{theo}{Theorem}[section]
\newtheorem{defi}[theo]{Definition}
\newtheorem{prop}[theo]{Proposition}
\newtheorem{hypo}[theo]{Hypothesis}
\newcommand{\itemr}{\item[$\rightarrow$]}
\newcommand{\itemb}{\item[$\bullet$]}
\newcommand{\R}{\mathbb{R}}
\newcommand{\E}{\mathbb{E}}
\newcommand{\dx}{\textrm{d}}
\begin{document}

\title{A stochastic model for protrusion activity }
\thanks{...}\thanks{...}% At most 5 thanks
\author{Christ\`{e}le Etchegaray }\address{MAP5, CNRS UMR 8145, Universit\'{e} Paris Descartes, 45 rue des Saints P\`{e}res 75006 Paris, France.\\
              \email{christele.etchegaray@parisdescartes.fr}}
\author{Nicolas Meunier}\address{MAP5, CNRS UMR 8145, Universit\'{e} Paris Descartes, 45 rue des Saints P\`{e}res 75006 Paris, France.\\
\email{nicolas.meunier@parisdescartes.fr} 
}

%
%\dedicated{\it Dedicated to Maurice Dupont} %if necessary
%
\begin{abstract} 
In this work we approach cell migration under a large-scale assumption, so that the system reduces to a particle in motion. Unlike classical particle models, the cell displacement results from its internal activity: 
%and we consider a macroscopic dynamics (the cell velocity) that is defined by a mesoscopic activity (the protrusion dynamics) arising from the intracellular scale: 
the cell velocity is a function of the (discrete) protrusive forces exerted by filopodia on the substrate. 
%Normalizing the protrusive forces, they get characterized by their orientation. 
%Therefore, the time evolution of these forces corresponds to the time evolution of a population structured by a quantitative trait. 
%As a consequence, 
Cell polarisation ability is modeled in the feedback that the cell motion exerts on the protrusion rates: faster cells form preferentially protrusions in the direction of motion. 
By using the mathematical framework of structured population processes previously developed to study population dynamics \cite{fournier_microscopic_2004}, we introduce rigorously the mathematical model and we derive some of its fundamental properties. We perform numerical simulations on this model showing that different types of trajectories may be obtained: Brownian-like, persistent, or intermittent when the cell switches between both previous regimes. We find back the trajectories usually described in the literature for cell migration.
% with a dynamics based on cellular processes.
\end{abstract}
%
%\begin{resume} ... \end{resume}
%
%
\maketitle

\selectlanguage{english}

 \section{Introduction}
 
 Cell migration is a fundamental process involved in physiological and pathological phenomena such as the immune response, morphogenesis, but also the development of metastasis from a tumor \cite{Friedl2003Tumour-cell-inv,morphogenesis}. To ensure these functions, cells have a highly complex out-of-equilibrium internal organization where multiscale reactions occur among polymers and molecules, leading to unpredictable macroscopic behaviours. \par 
In the case of cell crawling, cells spread on an adhesive substrate, and form extensions also called protrusions. 
Then, molecular adhesion complexes grow and ensure a mechanical connection between protrusions and the substrate, by which forces are transmitted and lead to a displacement. 
Protrusions of a crawling cell can be divided in two types: \emph{lamellipodia} are wide and flat and fluctuate continuously, while \emph{filopodia} are long finger-like extensions able to grow further and probe the substrate. \par
It has been observed that cells protrusive activity fluctuates a lot, and that these fluctuations are responsible for the long-term characteristics of trajectories \cite{Caballero2014Protrusion-fluc}. 
Cell trajectories can be very different even for a single cell type: some do not explore the environment, while others have a much more efficient displacement. It is of interest to try to capture this diversity in a mathematical model. \par 
Existing stochastic models for cell trajectories are either Random Walks, L\'evy flights or Active Brownian Particle models \cite{Romanczuk2012Active-Brownian}. 
In these models, key-features of the motion are quantified, such as the mean persistence time, or the stationary distribution of the particle's velocity \cite{Romanczuk2012Active-Brownian}. However, the dynamics is macroscopic and processes such as polarisation are taken into account by an arbitrary positive feedback at the scale of the trajectory. In this work, we present a 2D stochastic particle model for cell trajectories based on the filopodial activity, able to reproduce the diversity of trajectories observed.

\section{Model construction}
We choose space and time scales large enough so that the cell is described as an active particle (its center of mass). Cell shape and intracellular dynamics are therefore considered at the cell scale. We define now the velocity model that is based on force equilibrium.

\subsection{Velocity model}
At each time, the cell velocity writes $\vec{V}_{t}$, and its polar coordinates $(v_{t},\theta_{t})$.
The crawling of cells on an adhesive substrate occurs at very small scales. Indeed, cells sizes are of the order of $1-10\,\si{\micro \meter}$, while their speeds ranges at the scale of $\si{\micro \meter \per \second}$. Therefore, inertia is negligible for this system (see more precise justifications of the low Reynolds number setting in e.g \cite{tanimoto2014simple}), and Newton's second law of motion reduces to instantaneous force equilibrium: at all time $t \geq 0$,
\begin{displaymath}
\sum \vec{F}_{ext}(t) =\vec{0}\,.
\end{displaymath}
The cell being an active system, macroscopic forces that apply can be either passive or active. In this case appear
\begin{itemize}
\itemr a passive force: the \textbf{friction force} exerted by the substrate on the cell due to motion, that writes $\vec{f} = -\gamma \vec{V}_t$, with $\gamma$ the global friction coefficient,
\itemr active forces related to the protrusion process. Indeed, a complex internal activity gives rise to forces in the body of the cell. Filopodial protrusions can be considered as good readouts  \cite{Caballero2014Protrusion-fluc}. As a consequence, in the following, only \textbf{filopodial forces} will be considered. Note that at this scale, the formation of filopodia is discontinuous in time.
\end{itemize}

\begin{figure}
\centering
\includegraphics[scale=0.6]{./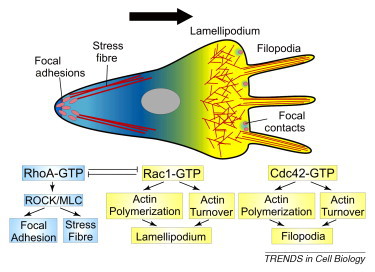}
\caption{Scheme of a polarised crawling cell: protrusive structures at the front (lamellipodium, filopodia), and contractile fibers at the back. The asymmetric activity combines with an asymmetric repartition of molecular regulators organized in feedback loops. Source: \cite{mayor2010keeping} }
\end{figure}
Combining these information, we get: 
\begin{equation}\label{velocity_model}
\gamma \vec{V_t} = \sum_{i=1}^{N_t} \vec{F_i}(t),
\end{equation}
where $N_t$ is the number of filopodia adhering on the substrate at time $t$, and $(\vec{F_i}(t))_{i}$ the filopodial forces. The cell motion is then entirely described by the protrusions.\par

\begin{hypo}\label{hypo2}
Each filopodial force vector is unitary and constant in time.
\end{hypo}
Now, denoting $\theta_i = arg(\vec{F_i})$, one can write
\begin{displaymath}
\vec{F_i}(t) = \begin{pmatrix}
\cos(\theta_i) \\ \sin(\theta_i)
\end{pmatrix}.
\end{displaymath}

Modelling cell motion then accounts to modelling the time evolution of the filopodial population in terms of individual orientation. 

\subsection{Protrusion model}
% Justifier l'utilisation d'un modèle de population.
Each filopodium is characterized by a quantitative parameter, its orientation $\theta \in [0,2\pi)$. Therefore, we use a measure valued process for the stochastic evolution of the set of filopodia, as in ecological population models \cite{fournier_microscopic_2004}.
% Then we are studying the evolution of a population structured by a phenotypic trait. In that case, at time $t$, $N_t$ individuals carry traits $(\theta_{1},...,\theta_{N_t})$. However, this vector has a non-constant size. A more natural formalism consists in using finite point measures as in structured population models \citep{fournier_microscopic_2004}.\par 
Let us denote $\mathcal{M}_F(\overline{\chi})$ the set of positive finite measures on $\overline{\chi}=[0,2\pi]$, equipped with the weak topology. Notice that as $\overline{\chi}$ is compact, weak and vague topologies on $\mathcal{M}_F(\overline{\chi})$ coincide. Write 
$\mathcal{M}$ for the subset of $\mathcal{M}_F(\overline{\chi})$ composed of all finite point measures. Then, a filopodium of orientation $\theta$ is described by a Dirac measure $\delta_\theta$ on $\chi$, and the whole population by 
\begin{displaymath}
\nu_t = \sum_{i=1}^{N_t} \delta_{\theta_i} \in \mathcal{M}.
\end{displaymath}

For any measurable function $f$ on $\overline{\chi}$ and any $\mu \in \mathcal{M}_F(\overline{\chi})$, we have $<\mu,f> = \int_{\chi} f(\theta) \mu(\dx \theta)\,.$ In particular, $<\nu_t,f> = \sum_{i=1}^{N_t} f(\theta_i)$, and the population size corresponds to $N_t = <\nu_t,1>$.
A simple way to express the velocity equation (\ref{velocity_model}) together with hypothesis (\ref{hypo2}) is to write

\begin{displaymath}
\gamma \vec{V_t} = \begin{pmatrix}
<\nu_t,\cos> \\ <\nu_t,\sin>
\end{pmatrix}.
\end{displaymath}

The cell motion is entirely described by a measure-valued markovian jump process $(\nu_t)_t$, as in adaptive stuctured population models. We describe now the different events arising. 

\begin{itemize}
\itemb The basic dynamics arising is the \textbf{isotropic appearance} of filopodia. It is responsible for the spontaneous activity that is observed experimentally. We write $\mathbf{c}$ for the creation rate.\par 
\itemb Each filopodium ends up disappearing: the disappearance or \textbf{death} rate is denoted by $\mathbf{d}$ constant.
\itemb \textbf{Polarisation} is characterized by a morphological and functional asymmetry visible both on cell shape and at the microscopic scale \cite{Siam_CHMV,PLOS}. Here, we use the mesoscopic scale of the model to account for polarisation by its feedback on the protrusive activity. Two phenomena have to be distinguished:
\begin{itemize}
\itemr The formation of a protrusion is induced by several microscopic regulators and generates a local positive feedback on the protrusive machinery. Following that, we assume that each filopodium is able to \textbf{reproduce}. Denote $\mathbf{r(\theta_i,\nu_t)}$ the individual reproduction rate of a filopodium of orientation $\theta_i$.
\itemr Polarisation is also reinforced by intracellular actin flows, (see \cite{Maiuri2015Actin_flows_med}). In particular, faster actin flows favor the formation of protrusions in a single stable configuration. Denote $\vec{u}$ for the space-averaged actin flow velocity over the cell. As actin flows are inwardly directed, $-\vec{u}$ characterizes the reinforced direction for protrusions. Moreover, we know that $\vec{V}=-\frac{1}{\alpha} \,\vec{u}$, where $\frac{1}{\alpha}$ depends on the cell type and the experimental setting. Therefore, we consider a positive coupling between the reproduction rate and $-\vec{u} = \alpha \vec{V}$, imposing a global feedback. 
\end{itemize}
\itemb Reproduction of spatially localized filopodia questions the localization of the new protrusions. An individual can reproduce to form a filopodium with the same orientation, or it can have a slightly different location. This phenomenon accounts for the stochastic fluctuations arising in the cell signalling pathways involved in protrusions. In our model, we describe this using the notion of \textbf{heredity} and subsequent \textbf{mutation event} for the orientation of the "offspring". \par 
For simplicity, we assume that at each reproduction event, the mutation probability $\mathbf{\mu}$ is constant. In the case of a mutant new protrusion, its orientation is determined following a probability distribution $\mathbf{g(z;\theta_i)}$ assumed centered in the parent's orientation $\theta_i$, with a constant variance.
%\item Finally, one has to describe the \textbf{limitation of resources} for the formation of protrusions. In macroscopic populations usually described by demographic models, resources are necessary for survival, justifying an impact on the death rate. In our case, resources are actin monomers, the main components of filopodia. Their limitation in the medium thus results in a decrease in protrusion formation. This translates to a multiplicative non-increasing term involved in the creation and reproduction rates such as $(1-\frac{N_t}{\lambda})_+$, with $\lambda$ accounting for the carrying capacity related to the medium.
\end{itemize}
The possible events are summed up in the following graph:

\begin{center}
\centering
\begin{tabular}{cccclll}
Creation (global) &&&&  \\
\textcolor{blue}{$c$} &&&&\\
   & & Clone &  &  \\
Reproduction (individual) &  $\nearrow$ & \textcolor{blue}{$1-\mu$} & & \\
 \textcolor{blue}{$r(\theta_i, \nu)$} &  $\searrow$ &&&  \\
 && Mutation&  $\longrightarrow$ & Choice of $\theta$ \\
Death (individual)&& \textcolor{blue}{$\mu$} &&\textcolor{blue}{$ g(z ; \theta_i)$}\\
\textcolor{blue}{$d$}&&&&
\end{tabular}
\end{center}

%%%%%%%%%%%%%%%%%%
Let us comment on the mathematical features of the model. In the case of no interaction between individuals, or global feedback, the process $(N_t)_t$ simply follows an immigration, birth and death dynamics, and mathematical information can be derived. In particular, the branching property still holds. This is no longer the case when adding interactions. For example, if the interaction relies on $V_t$, then knowing only $N_t$ is not sufficient and one has to know about the structured quantities $(N^{\theta_1},N^{\theta_2},...)$ at all time.\par

\subsubsection*{A choice of reproduction rate and mutation law}\label{ex:taux}
As protrusions are located on $[0,2\pi)$, it is natural to consider reproduction rates as circular functions. We chose a reproduction function that is positively correlated to $\alpha \vec{V}_t$ to account for polarisation. The idea is to have a function centered in $\theta_t$ the direction of motion, that gets sharper with increasing $v_t$.

%We introduce now circular probability distributions that can be used, up to a multiplicative constant, since we don't need particularly a probability density. Recall that characteristic functions $\phi$ of such distributions are defined only for integer values:
% \begin{displaymath}
% \phi_p = \E \left[e^{ip\theta} \right] = \rho_p e^{i\mu_p} = \alpha_p + i \beta_p\,,
% \end{displaymath}
% with $\rho_p \in [0,1]$. We have $\phi_\theta(0)=1$, and $\rho_1$ quantifies the mean concentration of the distribution, while $\mu_1$ is its mean direction. For more details, we refer to \cite{jammalamadaka_topics_2001}. \par

We choose a reproduction rate as a multiple of a circular normal distribution density written
\begin{displaymath}
f(\theta_i; \vec{V}_t) = \frac{1}{2\pi I_0(\kappa(|| \vec{V}_t ||))} \exp(\kappa(|| \vec{V}_t ||) \cos(\theta_i - \text{arg}(\vec{V}_t))),
\end{displaymath}
with $\kappa (|| \vec{V}_t ||) \geq 0$ a non decreasing shape parameter, and $I_0$ the $0$-order modified Bessel function of the first kind. 
%We can show that the $p^{eth}$ trigonometric moment for this density is $$\phi_P = \int e^{i p \theta} f(\theta; \mu,\kappa) \dx\theta  = \frac{I_p(\kappa)}{I_0(\kappa)}e^{i\mu}\,,$$ for $  p \in \mathbb{Z}$.
Therefore, we denote $r(\theta,\nu_s) = r^* f(\theta; \theta_t, \kappa)$. 

\begin{figure}%
\centering
\begin{minipage}{0.9\linewidth}%
\centering
\includegraphics[scale=0.5]{./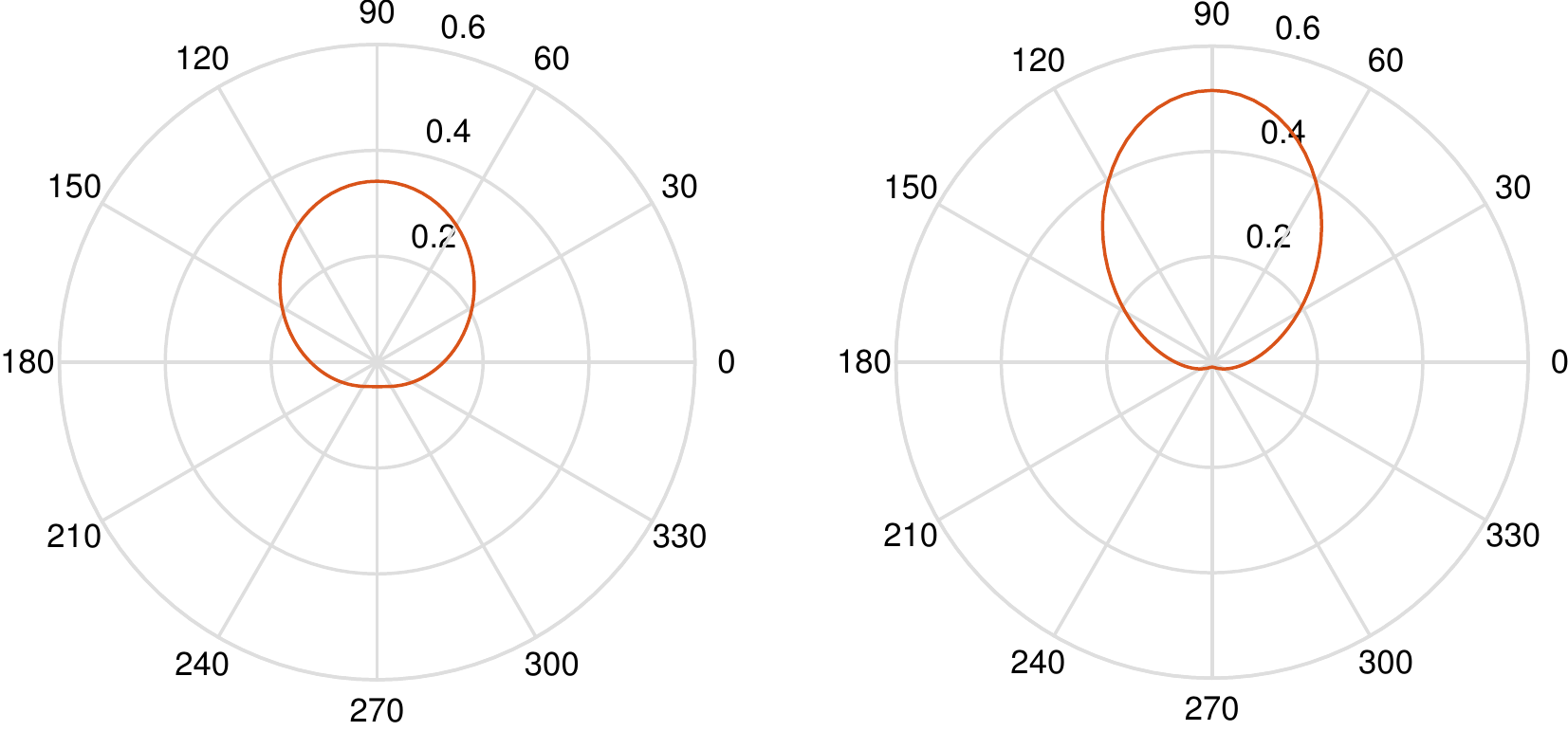}
\caption{Circular distribution for $\theta_t=\frac{\pi}{2}$, $\kappa=1$ (left) and $\kappa=2$ (right).}\label{fig:circular}%
\end{minipage}%
\end{figure}%

The mutation law is a circular normal distribution on $[0,2\pi)$ of density 
 $g(z;\theta_i)$, centered in $\theta_i$ and with a constant shape parameter (resp. variance) $\kappa$ (resp. $\sigma^2$).

\section{Mathematical properties}
From now on, we will use the notation $C$ for any constant, that will change from line to line.

% Introduction de l'EDS dirigée par des mesures de Poisson. Existence et unicité d'une solution. Processus solution markovien, et générateur.
We introduce here a stochastic differential equation for $(\nu_t)_t$ driven by Point Poisson Measures. We will show existence and uniqueness of a solution, and prove that it follows the dynamics previously described. 

In order to pick a specific individual in the population, we have to be able to order them or their trait. Indeed, from the n-uplet $(\theta_1,...,\theta_N)$, one can recover $\nu = \sum_{i=1}^N \delta_{\theta_i}$, but from $\nu$ it is only possible to know $\{\theta_1,...,\theta_N\}$.

\begin{defi}\label{notation}
Let us define the function 
\begin{displaymath}
\begin{array}{lccl}
H=(H^{1},...,H^{k},...) : & \mathcal{M} & \longrightarrow & \left(\chi\right)^{\mathbb{N}^{*}} \\
&&&\\
    &\displaystyle \nu = \sum_{i=1}^{n} \delta_{\theta_{i}} & \longmapsto & (\theta_{\sigma(1)},...,\theta_{\sigma(n)},... ), 
\end{array}
\end{displaymath}
with $\theta_{\sigma(1)} \preceq \theta_{\sigma(2)}\preceq ... \preceq \theta_{\sigma(n)}$, for an arbitrary order $\preceq$. 
Now, an individual can be picked by its label $i$, and the corresponding trait writes $H^{i}(\nu) = \theta_{\sigma(i)}$.
\end{defi}

Let $(\Omega,\mathcal{F},\mathbb{P})$ be a probability space, and $n(\dx i)$ the counting measure on $\mathbb{N}^*$. We introduce the following objects: 

\begin{itemize}\label{objets_proba}
\itemb $\nu_0\in \mathcal{M}$ the finite point measure describing the initial population, eventually equal to the null measure. It can be chosen stochastic as soon as $\mathbb{E}[<\nu_0,1>]<+\infty$.
\itemb $M_0(\dx s,\dx\theta,\dx u)$ a Poisson Point Measure on $[0,+\infty) \times \chi \times \mathbb{R}_+$, of intensity measure $\dx s \,\dx\theta\, \dx u$,
\itemb $M_1(\dx s,\dx i,\dx u)$ and $M_3(\dx s,\dx i,\dx u)$ Poisson Point Measures on $[0,+\infty) \times \mathbb{N}^* \times \mathbb{R}_+$, both of intensity measure $\dx s\, n(\dx i) \,\dx u$,
\itemb $M_2(\dx s,\dx i,\dx\theta,\dx u)$ a Poisson Point Measure on $[0,+\infty) \times \mathbb{N}^* \times \chi \times \mathbb{R}_+$, of intensity measure $\dx s n(\dx i) \dx\theta \dx u$.
\end{itemize}
The Poisson Measures are independent. Finally, $(\mathcal{F}_t)_{t\geq 0}$ denotes the canonical filtration generated by these objects. Let us construct the  $(\mathcal{F}_t)_{t\geq 0}$-adapted process $(\nu_t)_{t\geq 0}$ as the solution of the following SDE: $\forall t\geq 0,$

\begin{equation}\label{processus}
\begin{array}{lllllll}
\nu_t &=& \nu_0 & &&& \\
 &+&\displaystyle  \int_{0}^t \int_{\chi \times \mathbb{R}_+} &\delta_{\theta} && \displaystyle \mathds{1}_{u \leq \frac{c}{2\pi}} & M_0(\dx s,\dx\theta,\dx u) \\
&+& \displaystyle \int_{0}^t \int_{\mathbb{N}^* \times \mathbb{R}_+} & \delta_{H^i(\nu_{s})} &\displaystyle \mathds{1}_{i \leq N_{s}} & \mathds{1}_{u \leq (1- \mu)r(H^i(\nu_{s}),\nu_s) } &  M_1(\dx s,\dx i,\dx u) \\  
&+& \displaystyle \int_{0}^t \int_{\mathbb{N}^* \times \chi \times \mathbb{R}_+} &  \delta_{\theta} & \displaystyle \mathds{1}_{i \leq N_{s}} &\displaystyle \mathds{1}_{u \leq \mu r(H^i(\nu_{s}),\nu_s)  g\left( \theta; H^i(\nu_{s})\right))} & M_2(\dx s,\dx i,\dx\theta,\dx u) \\
&-&\displaystyle \int_{0}^t \int_{\mathbb{N}^* \times \mathbb{R}_+} &\delta_{H^i(\nu_{s})} & \displaystyle \mathds{1}_{i \leq N_{s}} & \displaystyle \mathds{1}_{u \leq d} & M_3(\dx s,\dx i,\dx u).
\end{array}
\end{equation}

In this equation, each term describes a different event. The Poisson Point Measures generate atoms homogeneously in time. However, the dynamics we want to describe follows state-dependent rates. Hence, we use indicator functions to keep only some of the events in order to get the wanted rates. Then, the Dirac measures correspond to the individuals added to or removed from the population. \par 

\begin{hypo}\label{hypo3}
The reproduction rate is a bounded function: 
\[\exists \overline{r}>0 \textrm{ such that }\forall \nu \in \mathcal{M},  \quad \forall (\theta,\nu) \in \chi \times \mathcal{M}, \quad 0\leq r(\theta,\nu) \leq \overline{r}.\]
\end{hypo}

\subsection{Existence and uniqueness}
In this part let us prove existence and uniqueness of a solution for equation (\ref{processus}). Recall that $N_t = <\nu_t,1>$. 
\begin{prop}\label{existence}
Assume the boundedness of the reproduction rate (hypothesis \ref{hypo3}), and that $\mathbb{E}[N_0]<+\infty$. Then, the two following properties hold.
\begin{enumerate}
\item There exists a solution $\nu \in \mathbb{D}(\mathbb{R}_+,\mathcal{M}(\chi))$ of equation (\ref{processus}) such that
\begin{equation}\label{controlNt}
\forall T>0,\; \mathbb{E}\left[\sup_{t\in [0,T]} N_t\right]<\mathbb{E}\left[ N_0\right]e^{\overline{r}T} + \frac{c}{\overline{r}}(e^{\overline{r}T}-1)<+\infty\,, 
\end{equation}
\item There is strong (pathwise) uniqueness of the solution. 
\end{enumerate}
\end{prop}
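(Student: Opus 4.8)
The plan is to follow the now-standard pathwise construction for measure-valued jump processes driven by Poisson point measures, as in \cite{fournier_microscopic_2004}. The whole argument rests on the observation that, under Hypothesis \ref{hypo3}, the instantaneous total event rate when the population has size $N$ is at most $c+(\overline{r}+d)N$, which is finite whenever $N$ is; the only pathology to exclude is an accumulation of jumps in finite time, and the moment bound (\ref{controlNt}) is precisely what rules this out. So I would first establish (\ref{controlNt}). Note that $N_t=\langle\nu_t,1\rangle$ jumps by $+1$ at each creation and each reproduction and by $-1$ at each death; integrating the $M_2$-term over $\theta\in\chi$ shows that the total reproduction intensity of individual $i$ is $(1-\mu)r+\int_{\chi}\mu r\, g(\theta;H^i(\nu_s))\,\dx\theta = r(H^i(\nu_s),\nu_s)\le\overline{r}$, since $g(\cdot;\theta_i)$ is a probability density. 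Discarding the nonpositive death contribution, $\sup_{s\le t}N_s$ is dominated by $N_0$ plus the number of birth events, whose compensator is $\int_0^t\big(c+\sum_{i=1}^{N_s}r(H^i(\nu_s),\nu_s)\big)\,\dx s\le\int_0^t(c+\overline{r}\,N_s)\,\dx s$. Taking expectations and setting $\phi(t)=\mathbb{E}[\sup_{s\le t}N_s]$ gives $\phi(t)\le\mathbb{E}[N_0]+\int_0^t(c+\overline{r}\,\phi(s))\,\dx s$, and the integral form of Gronwall's lemma yields exactly (\ref{controlNt}).

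For existence I would build the solution by truncation. For each $n$ let $\nu^n$ solve (\ref{processus}) with all four jump indicators further multiplied by $\mathbf{1}_{N_{s^-}\le n}$; since the total rate is then bounded by $c+(\overline{r}+d)n$, the atoms of the Poisson measures that actually trigger a jump form a locally finite set, so $\nu^n$ is defined unambiguously interval by interval between successive jumps as an $(\mathcal{F}_t)$-adapted càdlàg process in $\mathcal{M}$. The estimate above holds uniformly in $n$, so setting $\tau_n=\inf\{t:N^n_t\ge n\}$, Markov's inequality gives $\mathbb{P}(\tau_n\le T)\le\frac1n\,\mathbb{E}[\sup_{t\le T}N^n_t]\le C_T/n\to0$. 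The truncated processes are consistent ($\nu^n\equiv\nu^m$ on $[0,\tau_{n\wedge m})$), hence $\tau_n\uparrow\infty$ a.s.\ and $\nu:=\lim_n\nu^n$ is well defined on $[0,\infty)$, solves (\ref{processus}), and inherits (\ref{controlNt}) by Fatou.

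Finally, pathwise uniqueness follows from a coupling/induction argument. Let $\nu^1,\nu^2$ be two solutions of (\ref{processus}) driven by the same Poisson measures with $\nu^1_0=\nu^2_0$. Each satisfies (\ref{controlNt}), so $\sup_{t\le T}N^j_t<\infty$ a.s.\ and the jump times of each do not accumulate; order the pooled jump times as $0<T_1<T_2<\cdots$. On $[0,T_1)$ both processes equal $\nu_0$, so the thresholds in all four indicators — which depend on the state only through $\nu_{s^-}$ and the deterministic labelling $H$ of Definition \ref{notation} — coincide for the two processes; the same atom therefore triggers the same jump at $T_1$, giving $\nu^1=\nu^2$ on $[0,T_1]$. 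Propagating this by induction over $k$ and using $T_k\to\infty$ shows $\nu^1=\nu^2$ on $[0,\infty)$ a.s.

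I expect the main obstacle to be the non-explosion step: one must transfer the a priori bound from the truncated processes to control $\tau_n$ and justify that the limit genuinely solves the untruncated equation, rather than stopping at a finite explosion time. The uniqueness argument is comparatively soft, its only delicate point being that the labelling map $H$ makes the selection of the reproducing or dying individual a deterministic function of the current measure, so that identical past states force identical jumps.
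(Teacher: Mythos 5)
Your proposal is correct and follows essentially the same route as the paper: both are the standard Fournier--M\'el\'eard argument resting on the a priori bound $\mathbb{E}[\sup_{t\le T\wedge\tau_n}N_t]\le \mathbb{E}[N_0]e^{\overline r T}+\frac{c}{\overline r}(e^{\overline r T}-1)$ obtained by dropping the death term, using that $g$ is a probability density so the total birth rate per individual is $r\le\overline r$, then Gronwall, Markov's inequality to get $\tau_n\to\infty$, and Fatou. The only (immaterial) organizational difference is that you obtain non-explosion via truncated equations $\nu^n$ and their consistency, whereas the paper constructs the solution jump by jump up to $T_\infty$ and excludes $T_\infty<\infty$ by a contradiction argument comparing with a Poisson process of intensity $c+(\overline r+d)N'$; your truncation also cleanly justifies the finiteness of $\mathbb{E}[\sup_{s\le t}N_s]$ needed before Gronwall can be invoked, a point your first paragraph glosses over but your construction repairs.
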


\begin{proof}[Proof of \ref{existence}]
The proof is similar to prop. 2.2.5 and 2.2.6 in \cite{fournier_microscopic_2004}.

\begin{enumerate}
\item Let $T_0=0$, and $t\in \mathbb{R}_+$. Then, the global jump rate of $\nu_t$ is smaller than $c + (\overline{r}+d)N_t$. Hence one can $\mathbb{P}-a.s$ define the sequence $(T_k)_{k\in\mathbb{N}^*}$ of jumping times, as well as 
$T_\infty := \lim_{k\rightarrow +\infty} T_k$.\par 

Now, by construction, it is $\mathbb{P}-a.s$ possible to build "step-by-step" a solution of equation (\ref{processus}) on $[0,T_\infty[$. Showing existence of a solution $(\nu_t)_{t\in \mathbb{R}_+} \in \mathbb{D}(\mathbb{R}_+,\mathcal{M}(\chi))$ amounts to showing that $\mathbb{P}-a.s$, $T_\infty = +\infty$. That is equivalent to saying that there cannot be an infinite number of jumps in a finite time interval. \par 

\itemb \textbf{\underline{First, we show the control property (\ref{controlNt})}}. For $n>0$ define the sequence of stopping times $(\tau_n)_n$ by
\begin{displaymath}
\tau_n = \inf_{t \geq 0} \{N_t \geq n\}.
\end{displaymath}

\begin{itemize}
\item[$\rightarrow$] \boldmath \textbf{Let us show that $(\tau_n)_{n\geq 0} $  is a sequence of stopping times for $(\mathcal{F}_t)_t$}. \unboldmath Denote $\sigma_t=\sigma(\nu_s, \; 0\leq s \leq t)$ the $\sigma$-algebra generated by $\{\nu_s,\;0\leq s \leq t\}$. Then $\forall t\geq0$, $\sigma_t \subseteq \mathcal{F}_t$. For $(n,m)\in \left(\mathbb{N}^*\right)^2$, notice that
\begin{displaymath}
\begin{aligned}
\{\tau_n \leq m \} &= \{\inf\{t\geq 0 ,\; \left< \nu_t,1\right> \geq n \} \leq m \} \\
& \in \sigma_m \subseteq \mathcal{F}_m,
\end{aligned}
\end{displaymath}
and $(\tau_n)_{n\geq 0} $ is indeed a sequence of stopping times.
\item[$\rightarrow$]\boldmath \textbf{Now, we prove that for all $T<+\infty$, the quantity 
$\mathbb{E}\left[ \sup_{t\in [0,T \wedge \tau_n]} N_t \right]$ is bounded $\forall n\geq 0$.} \unboldmath \par

For $t\in\mathbb{R}_+$, using equation (\ref{processus}) and dropping the non-positive term, one has
\begin{eqnarray*}
N_{t\wedge \tau_n} &=& <\nu_{t\wedge \tau_n},1> \leq  N_0 + \int_{0}^{t\wedge \tau_n} \int_{\chi \times \mathbb{R}_+} \mathds{1}_{u \leq c}  M_0(\dx s,\dx \theta,\dx u) \\
&+& \int_{0}^{t\wedge \tau_n} \int_{\mathbb{N}^* \times \mathbb{R}_+}  \mathds{1}_{i \leq N_{s}} \mathds{1}_{u \leq (1- \mu)r(H^i(\nu_{s}),\nu_s)} M_1(\dx s,\dx i,\dx u) \\  
&+& \int_{0}^{t\wedge \tau_n}\int_{\mathbb{N}^* \times \chi \times \mathbb{R}_+}  \mathds{1}_{i \leq N_{s}}  \mathds{1}_{u \leq \mu r(H^i(\nu_{s}),\nu_s)  g( \theta; H^i(\nu_{s}))} M_2(\dx s,\dx i,\dx \theta,\dx u).
\end{eqnarray*}
As each integrand is positive, bounded, and integrable with respect to the intensity measure, taking the expectation and using the Fubini theorem, we can write

\begin{eqnarray}\label{inequality1}
\mathbb{E}\left[\sup_{t\in [0,T\wedge \tau_N]}N_t\right] &\leq & \mathbb{E}[N_0] + \E\left[\int_{0}^{T\wedge \tau_N} \left( c + \sum_{i=1}^{N_t} r(\theta_i,\nu_t) \right)\dx t \right]  \nonumber\\
&\leq & \mathbb{E}[N_0] + cT + \overline{r} \int_{0}^{T} \E\left[\sup_{s\in [0,t\wedge \tau_N]}N_s\right] \dx t \nonumber
%&\leq & \mathbb{E}[N_0] + \overline{c}T + \overline{r} \int_{0}^{T} \left( \mathbb{E}[N_0] + \overline{c}t \right) e^{\overline{r}(T-t)}\dx t  \nonumber \\
%& \leq & \mathbb{E}[N_0] e^{\overline{r}T} + \frac{\overline{c}}{\overline{r}}\left( e^{\overline{r}T} -1\right).
\end{eqnarray}
leading to the $T$-dependent bound using the Gronwall inequality. 

\item[$\rightarrow$] \boldmath \textbf{ Let us prove that $\mathbb{P}-a.s,\; \lim_{n\rightarrow +\infty} \tau_n = +\infty$.} \unboldmath If this wasn't the case, there would exist $M<+\infty$ and a set $A_M \subset \Omega$ such that $\mathbb{P}(A_M) >0$, and $\forall \omega \in A_M, \; \lim_{n\rightarrow +\infty} \tau_n(\omega) < M$. By the Markov inequality, $\forall T > M,$

\begin{displaymath}
\E\left[\sup_{t\in [0,T\wedge \tau_n]}N_t\right] \geq n \underbrace{\mathbb{P}\left(\sup_{t\in [0,T\wedge \tau_n]}N_t \geq n\right)}_{\geq \mathbb{P}(A_M) >0},
\end{displaymath}
which is in contradiction with equation \eqref{inequality1}. 

\item[$\rightarrow$] \textbf{Property (\ref{controlNt}) is proved by the Fatou lemma:} 
\begin{equation*}
\begin{aligned}
\E\left[ \sup_{t\in [0,T]}N_t \right] &= \E\left[\liminf_{n\rightarrow +\infty} \sup_{t\in [0,T\wedge \tau_n]}N_t \right] \\
&\leq  \liminf_{n\rightarrow +\infty} \E\left[ \sup_{t\in [0,T\wedge \tau_n]}N_t \right] \leq \mathbb{E}[N_0] e^{\overline{r}T} + \frac{c}{\overline{r}}\left( e^{\overline{r}T} -1\right) <+\infty.
\end{aligned}
\end{equation*}

\itemb \textbf{\underline{\boldmath Now, let us show that $\mathbb{P}-a.s$, $T_\infty = +\infty$. \unboldmath}} If this is not the case, then there exists $\overline{M} <+\infty$ and a set $A_{\overline{M}} \subset \Omega$ such that $\mathbb{P}(A_{\overline{M}}) >0$ and $\forall w \in A_{\overline{M}}$, $T_\infty(\omega) < \overline{M}$. Moreover, if the assertion

\begin{equation}\label{assertion1}
\forall \omega \in A_{\overline{M}},\; \lim_{k\rightarrow +\infty} N_{T_k}(\omega) = +\infty,
\end{equation}

is true, then we would have 

\begin{displaymath}
\forall N>0, \; \forall \omega \in A_{\overline{M}},\; \tau_N(\omega) \leq \overline{M},
\end{displaymath}

which contradicts $\lim_{n\rightarrow +\infty} \tau_n = +\infty$. As a consequence, if we prove (\ref{assertion1}), the proposition is proved. 
If (\ref{assertion1}) is not true, there would exist $N'>0$ and a set $B\subset A_{\overline{M}}$ such that $\mathbb{P}(B)>0$ and 

\begin{displaymath}
\forall \omega \in B, \; \forall k \in \mathbb{N},\; N_{T_k}(\omega) < N'.
\end{displaymath}

Then, $\forall \omega \in B$, $(T_k(\omega))_k$ can be seen as the subsequence of a sequence of jumping times $(T_k^1(\omega))_k$ of a Point Poisson Process of intensity $c + (\overline{r}+d)N'$. The only accumulation point of $(T_k^1(\omega))_k$ being $\mathbb{P}-a.s \; +\infty$, it contradicts the definition of $B$, and proves (\ref{assertion1}). 
\end{itemize}

\item The sequence of jumping times $(T_k)_{k\in \mathbb{N}}$ being already defined, we only have to show that $(T_k,\nu_{T_k})_{k\in \mathbb{N}}$ are uniquely determined by $D=(\nu_0,M_0,M_1,M_2,M_3)$ defined above.
But this is clear by construction of the process. 
\end{enumerate}
\end{proof}

\subsection{Markov property}

Now, we can show that the solution $(\nu_t)_t$ of equation (\ref{processus}) is a Markov process  in the Skorohod space $\mathbb{D}(\R_+,\mathcal{M}_F(\chi))$ of càdlàg finite measure-valued processes on $\chi$. For that purpose, we introduce $\forall \nu\in \mathcal{M}$, $\Phi: \mathcal{M} \rightarrow \mathbb{R}$ measurable and bounded, the operator $L$ defined by 

\begin{eqnarray}\label{generator}
L\Phi(\nu) &=& \int_{\chi} \frac{c}{2\pi} \left[ \Phi(\nu + \delta_{\theta}) - \Phi(\nu) \right] \dx \theta \nonumber\\
&+& \int_{\chi} (1- \mu) r(\theta, \nu)  \left[ \Phi(\nu + \delta_{\theta}) - \Phi(\nu) \right] \nu(\dx \theta) \\
&+& \int_{\chi} \mu r(\theta,\nu)  \int_{\chi} \left[ \Phi(\nu + \delta_{z}) - \Phi(\nu) \right] g(z; \theta) \dx z \; \nu(\dx \theta)\nonumber\\
&+& \int_{\chi} d \left[\Phi(\nu - \delta_{\theta}) - \Phi(\nu) \right] \nu(\dx \theta). \nonumber 
\end{eqnarray}

\begin{prop}\label{markovprop}
 Take $(\nu_t)_{t\geq 0}$ the solution of equation (\ref{processus}) with $\mathbb{E}[<\nu_0,1>]<+\infty$. Then, $(\nu_t)_{t\geq 0}$ is a Markovian process of infinitesimal generator $L$.
\end{prop}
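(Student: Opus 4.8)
The plan is to show that $(\nu_t)_{t\geq 0}$ is Markov with generator $L$ by combining two ingredients: the pathwise uniqueness already established in Proposition \ref{existence}, which yields the Markov property for free, and an application of It\^o's formula for jump processes driven by Poisson Point Measures, which identifies the generator as $L$. First I would fix a bounded measurable test function $\Phi: \mathcal{M} \rightarrow \mathbb{R}$ and apply the change-of-variables (It\^o) formula for pure-jump processes to $\Phi(\nu_t)$. Each of the four integral terms in the SDE \eqref{processus} contributes a jump of the form $\Phi(\nu_{s^-} + \delta_\theta) - \Phi(\nu_{s^-})$ (creation, clonal reproduction, mutation) or $\Phi(\nu_{s^-} - \delta_\theta) - \Phi(\nu_{s^-})$ (death), weighted by the corresponding indicator. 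This gives
\begin{displaymath}
\Phi(\nu_t) = \Phi(\nu_0) + \int_0^t \widetilde{M}(s)\,\dx s + \text{(martingale terms)},
\end{displaymath}
where the martingale part comes from the compensated Poisson measures $\widetilde{M}_j = M_j - \text{(intensity)}$.

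The key computational step is to integrate out the auxiliary variable $u$ (and the label $i$) against the deterministic intensity measure. For instance, the creation term $\int_{\chi \times \mathbb{R}_+} [\Phi(\nu_{s^-}+\delta_\theta) - \Phi(\nu_{s^-})] \mathds{1}_{u \leq c/(2\pi)} \,\dx s\,\dx\theta\,\dx u$ integrates over $u$ to produce the factor $\frac{c}{2\pi}$, matching the first line of \eqref{generator}. Similarly, integrating the $M_1$ term over $u \in \mathbb{R}_+$ and summing over labels $i \leq N_s$ via $\sum_{i=1}^{N_s} \varphi(H^i(\nu_s)) = \langle \nu_s, \varphi \rangle = \int_\chi \varphi(\theta)\,\nu_s(\dx\theta)$ yields the $(1-\mu)r(\theta,\nu)$ reproduction term; the $M_2$ and $M_3$ terms give the mutation and death terms in the same way. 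After this bookkeeping the finite-variation part of $\Phi(\nu_t)$ is exactly $\int_0^t L\Phi(\nu_s)\,\dx s$, so that
\begin{displaymath}
\Phi(\nu_t) - \Phi(\nu_0) - \int_0^t L\Phi(\nu_s)\,\dx s
\end{displaymath}
is a local martingale. To promote this to a genuine martingale (Dynkin's formula) I would invoke Hypothesis \ref{hypo3} together with the moment bound \eqref{controlNt}: since $|L\Phi(\nu)|$ is controlled by $C(1 + N)$ for bounded $\Phi$ and $\mathbb{E}[\sup_{t\leq T} N_t] < +\infty$, the compensators are integrable and the localizing sequence of stopping times (e.g. the $(\tau_n)_n$ from the previous proof) can be removed by dominated convergence.

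Finally, the Markov property itself follows from pathwise uniqueness. Because the solution started at time $s$ from state $\nu_s$ is, by strong uniqueness, a measurable function of $\nu_s$ and of the increments of the driving Poisson measures after time $s$ — which are independent of $\mathcal{F}_s$ — the law of $(\nu_{s+t})_{t\geq 0}$ conditionally on $\mathcal{F}_s$ depends only on $\nu_s$. Combined with the Dynkin formula above, which identifies the infinitesimal action of the semigroup with $L$, this establishes that $(\nu_t)_t$ is Markov with generator $L$. I expect the main obstacle to be the integrability justification in the second step: one must verify carefully that all four integrands are integrable against their intensity measures so that Fubini applies and the compensated integrals are true martingales rather than merely local ones. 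This is exactly where Hypothesis \ref{hypo3} (boundedness of $r$) and the control \eqref{controlNt} on $\mathbb{E}[\sup_{t\leq T} N_t]$ are essential, since the reproduction and mutation intensities scale linearly with the population size $N_s$.
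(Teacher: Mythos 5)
Your proposal is correct and follows essentially the same route as the paper: both decompose $\Phi(\nu_t)$ into its jumps driven by the four Poisson point measures, integrate out the auxiliary variables $u$ and the labels $i$ against the intensity measures using Hypothesis \ref{hypo3} and the moment bound \eqref{controlNt} with the localizing times $\tau_N$, and thereby identify the generator as $L$. The only (harmless) differences are presentational: you phrase the identification via the martingale problem and Dynkin's formula whereas the paper differentiates $\mathbb{E}[\Phi(\nu_{t\wedge\tau_N})]$ at $t=0$, and you spell out the Markov property via pathwise uniqueness and independence of the Poisson increments where the paper simply asserts it holds by construction.
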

In particular, this proposition ensures that the law of $(\nu_t)_{t\geq 0}$ is independent of the order $\preceq$ involved in (\ref{notation}).

\begin{proof}[Proof of proposition \ref{markovprop}]
The process $(\nu_t)_{t\geq 0} \in \mathbb{D}(\mathbb{R}_+,\mathcal{M}(\overline{\chi}))$ is markovian by construction. Now, let $N_0 <N< +\infty$, and consider again the stopping time $\tau_N$. Let $\Phi :  \mathcal{M} \rightarrow \mathbb{R}$ be measurable and bounded. As $\mathbb{P}-a.s$ we can write

\begin{equation}\label{Phi(nu)}
\Phi(\nu_t) = \Phi(\nu_0)  + \sum_{s\leq t} \Phi(\nu_{s^-} + (\nu_{s}-\nu_{s^-}))-\Phi(\nu_{s^-})\,, 
\end{equation}
we have

\begin{eqnarray*}
&& \Phi(\nu_{t\wedge \tau_N}) = \Phi(\nu_{0}) + \int_0^{t\wedge \tau_N} \int_{\chi\times \mathbb{R}_+} \left[ \Phi(\nu_{s^-} + \delta_{\theta}) -\Phi(\nu_{s^-} )  \right] \mathds{1}_{u\leq \frac{c}{2\pi}} M_0(\dx s,\dx \theta,\dx u) \\
&+& \int_0^{t\wedge \tau_N} \int_{\mathbb{N^*}\times \mathbb{R}_+} \left[ \Phi(\nu_{s^-} + \delta_{H^i(\nu_{s^-})}) -\Phi(\nu_{s^-})   \right] \mathds{1}_{i\leq N_{s^-}} \mathds{1}_{u\leq (1-\mu) r(H^i(\nu_{s^-}),\nu_{s^-})} M_1(\dx s,\dx i,\dx u)  \\
&+& \int_0^{t\wedge \tau_N} \int_{\mathbb{N^*}\times  \chi \times \mathbb{R}_+} \left[ \Phi(\nu_{s^-} + \delta_{z}) -\Phi(\nu_{s^-})   \right] \mathds{1}_{i\leq N_{s^-}} \mathds{1}_{u\leq \mu r(H^i(\nu_{s^-}),\nu_{s^-}) g( z; H^i(\nu_{s}))} M_2(\dx s,\dx i,\dx z, \dx u)  \\
&+& \int_0^{t\wedge \tau_N} \int_{\mathbb{N^*} \times \mathbb{R}_+} \left[ \Phi(\nu_{s^-} - \delta_{H^i(\nu_{s^-})}) -\Phi(\nu_{s^-})   \right] \mathds{1}_{i\leq N_{s^-}} \mathds{1}_{u\leq d} M_3(\dx s,\dx i, \dx u)\,.  \\
\end{eqnarray*}

Again, as all integrands are bounded, we can take expectations to get

\begin{eqnarray*}
\mathbb{E}\left[ \Phi(\nu_{t\wedge \tau_N}) \right] &=& \mathbb{E}\left[ \Phi(\nu_{0})  \right] + \mathbb{E}\left[  \int_0^{t\wedge \tau_N} \int_{\chi} \left[ \Phi(\nu_{s^-} + \delta_{\theta}) -\Phi(\nu_{s^-} )  \right] \frac{c}{2\pi} \dx \theta \dx s \right] \\
&+& \mathbb{E}\left[  \int_0^{t\wedge \tau_N} \sum_{i=1}^{N_{s^-}}
 \left[ \Phi(\nu_{s^-} + \delta_{H^i(\nu_{s^-})}) -\Phi(\nu_{s^-})   \right]  (1-\mu) r(H^i(\nu_{s^-}),\nu_{s^-}) \dx s  \right]\\
&+& \mathbb{E}\left[  \int_0^{t\wedge \tau_N} \sum_{i=1}^{N_{s^-}}
\mu r(H^i(\nu_{s^-}),\nu_{s^-}) \int_{\chi } \left[ \Phi(\nu_{s^-} + \delta_{z}) -\Phi(\nu_{s^-})   \right]  g(z; H^i(\nu_{s}))  \dx z \dx s  \right]\\
&+& \mathbb{E}\left[  \int_0^{t\wedge \tau_N}  \sum_{i=1}^{N_{s^-}} \left[ \Phi(\nu_{s^-} - \delta_{H^i(\nu_{s^-})}) -\Phi(\nu_{s^-})   \right] 
d \dx s \right], \\
&=:& \mathbb{E}\left[ \Phi(\nu_{0}) \right] + \mathbb{E}\left[ \psi(t\wedge \tau_N,\nu) \right].
\end{eqnarray*}
On the one hand, $\forall t\in [0,T]$, 
\begin{eqnarray*}
\parallel \psi(t\wedge \tau_N,\nu) \parallel_{\infty} &\leq & 2 T \parallel \Phi \parallel_{\infty} c  + 2 T \parallel \Phi \parallel_{\infty} (1- \mu) \overline{r} N + 2 T \parallel \Phi \parallel_{\infty} \mu \overline{r} N + 2 T \parallel \Phi \parallel_{\infty} d N \\
&\leq & C T \parallel \Phi \parallel_{\infty} (c + (\overline{r}+d)N) < +\infty.
\end{eqnarray*}

On the other hand, $t\mapsto \psi(t\wedge \tau_N,\nu)$ is derivable in $t=0$ $\mathbb{P}-a.s$ (as $\nu \in \mathbb{D}(\mathbb{R}_+,\mathcal{M}(\chi))$), and for a given $\nu_0$, we have

\begin{eqnarray*}
\frac{\partial \psi}{\partial t}(0,\nu_0) &=&  \int_{\chi} \left[ \Phi(\nu_{0} + \delta_{\theta}) -\Phi(\nu_{0} )  \right] \frac{c}{2\pi} \dx \theta \\
&+& \sum_{i=1}^{N_{0}}
 \left[ \Phi(\nu_{0} + \delta_{H^i(\nu_{0})}) -\Phi(\nu_{0})   \right]  (1-\mu) r(H^i(\nu_{0}),\nu_{0})   \\
&+&  \sum_{i=1}^{N_{0}} \mu r(H^i(\nu_{0}),\nu_{0})
\int_{\chi } \left[ \Phi(\nu_{0} + \delta_{z}) -\Phi(\nu_{0})   \right]  g( z; H^i(\nu_{0}))  \dx z  \\
&+&  \sum_{i=1}^{N_{0}} \left[ \Phi(\nu_{0} - \delta_{H^i(\nu_{0})}) -\Phi(\nu_{0})   \right] d\,.
\end{eqnarray*}

Moreover, $\parallel \frac{\partial \psi}{\partial t}(0,\nu_0) \parallel \leq C \parallel \Phi \parallel_{\infty}(c+ N_0 (\overline{r}+d))$. Now,

\begin{displaymath}
\begin{aligned}
L\phi(\nu_0) &:= \left. \frac{\partial \mathbb{E}\left[\phi(\nu_t)\right]}{\partial t}\right\vert_{t=0} \\
&= \int_{\chi} \left[ \Phi(\nu_{0} + \delta_{\theta}) -\Phi(\nu_{0} )  \right] \frac{c}{2\pi} \dx \theta + \sum_{i=1}^{N_{0}}
 \left[ \Phi(\nu_{0} + \delta_{H^i(\nu_{0})}) -\Phi(\nu_{0})   \right]  (1-\mu) r(H^i(\nu_{0}),\nu_{0})  \\
 & +  \sum_{i=1}^{N_{0}} \mu r(H^i(\nu_{0}),\nu_{0})
\int_{\chi } \left[ \Phi(\nu_{0} + \delta_{z}) -\Phi(\nu_{0})   \right] g( z; H^i(\nu_{0}))  \dx z
\\
& +  \sum_{i=1}^{N_{0}} \left[ \Phi(\nu_{0} - \delta_{H^i(\nu_{0})}) -\Phi(\nu_{0})   \right] d,
\end{aligned}
\end{displaymath}
or equivalently 

\begin{displaymath}
\begin{aligned}
L\phi(\nu_0) &= \int_{\chi} \left[ \Phi(\nu_{0} + \delta_{\theta}) -\Phi(\nu_{0})  \right] \frac{c}{2\pi} \dx \theta + \sum_{i=1}^{N_{0}}
 \left[ \Phi(\nu_{0} + \delta_{H^i(\nu_{0})}) -\Phi(\nu_{0})   \right]  r(H^i(\nu_{0}),\nu_{0})  \\
 & +  \sum_{i=1}^{N_{0}} \left[ \Phi(\nu_{0} - \delta_{H^i(\nu_{0})}) -\Phi(\nu_{0})  \right] d\\
 &+ \mu \sum_{i=1}^{N_{0}} r(H^i(\nu_{0}),\nu_{0}) \left( \int_{\chi } \Phi(\nu_{0} + \delta_{z}) g( z; H^i(\nu_{0}))  \dx z - \Phi(\nu_{0} + \delta_{H^i(\nu_{0})}) \right) \,.
 \end{aligned}
\end{displaymath}
\end{proof}

\section{Numerical simulations}

The construction of the process $(\nu_t)_t$ furnishes directly an algorithm for simulations. We proceed as follows: start with the population measure $\nu_k$ at time $t_k$, for a particle located at $X_{k}$.
\begin{description}
\item[Time of next event] let $\tau= c+ <\nu_k,r+d>$ denote the global jump rate of the process. Then, the time of the next event writes $t_{k+1} := t_k + \Delta t $, where 
\begin{displaymath}
\Delta t \sim Exp(\tau)\,.
\end{displaymath}
\item[Nature of the event] what happens at time $t_{k+1}$ is determined as follows:
\begin{itemize}
\item creation of a protrusion occurs with probability $\frac{c}{\tau}$. Its orientation is chosen uniformly on $[0,2\pi)$. 
\item reproduction of the protrusion number $i$ occurs with probability $\frac{r(H^i(\nu_k),\nu_k)}{\tau}$. Then,
\begin{itemize}
\item[$\rightarrow$] with probability $(1-\mu)$, the new protrusion has orientation $H^i(\nu_k)$,
\item[$\rightarrow$] with probability $\mu$, its orientation is chosen with the realization of a random variable having a probability density $g(\cdot; H^i(\nu_k),\nu_k)$.
\end{itemize}
\item protrusion number $i$ disappears with probability $\frac{d}{\tau}$.
\end{itemize} 
The measure $\nu_{k+1}$ is then obtained from $\nu_k$ and the information of the event occuring at time $t_{k+1}$.
\item[Updates] the particle's new position is 
\begin{displaymath}
X_{k+1} = X_k + \Delta t \, V_k\,,
\end{displaymath}
while $V_{k+1}= \frac{1}{\gamma} \begin{pmatrix}
<\nu_{k+1},\cos>\\<\nu_{k+1},\sin>
\end{pmatrix}$.
\end{description}

One only has to start again to get a trajectory over time. 

\subsection{Results}
Let us now present the numerical trajectories we obtained, that are displayed in figures \ref{fig:peu_mut} and \ref{fig:mut}. Recall that polarisation is quantified by $-\alpha v$, so that the larger $\alpha$ is, the more concentrated in the direction of motion the protrusions are formed. We observe indeed different types of trajectories for varying $\alpha$, from Brownian-like to persistent. In figure \ref{fig:peu_mut}, the mutation probability is $\mu=0.2$, whereas it is $\mu=0.8$ in figure \ref{fig:mut}. We observe that the territory exploration is significantly lower for a higher mutation probability. This shows that the mutation events can not be neglected. 

\begin{figure}[p]
\centering
\includegraphics[scale=0.5]{./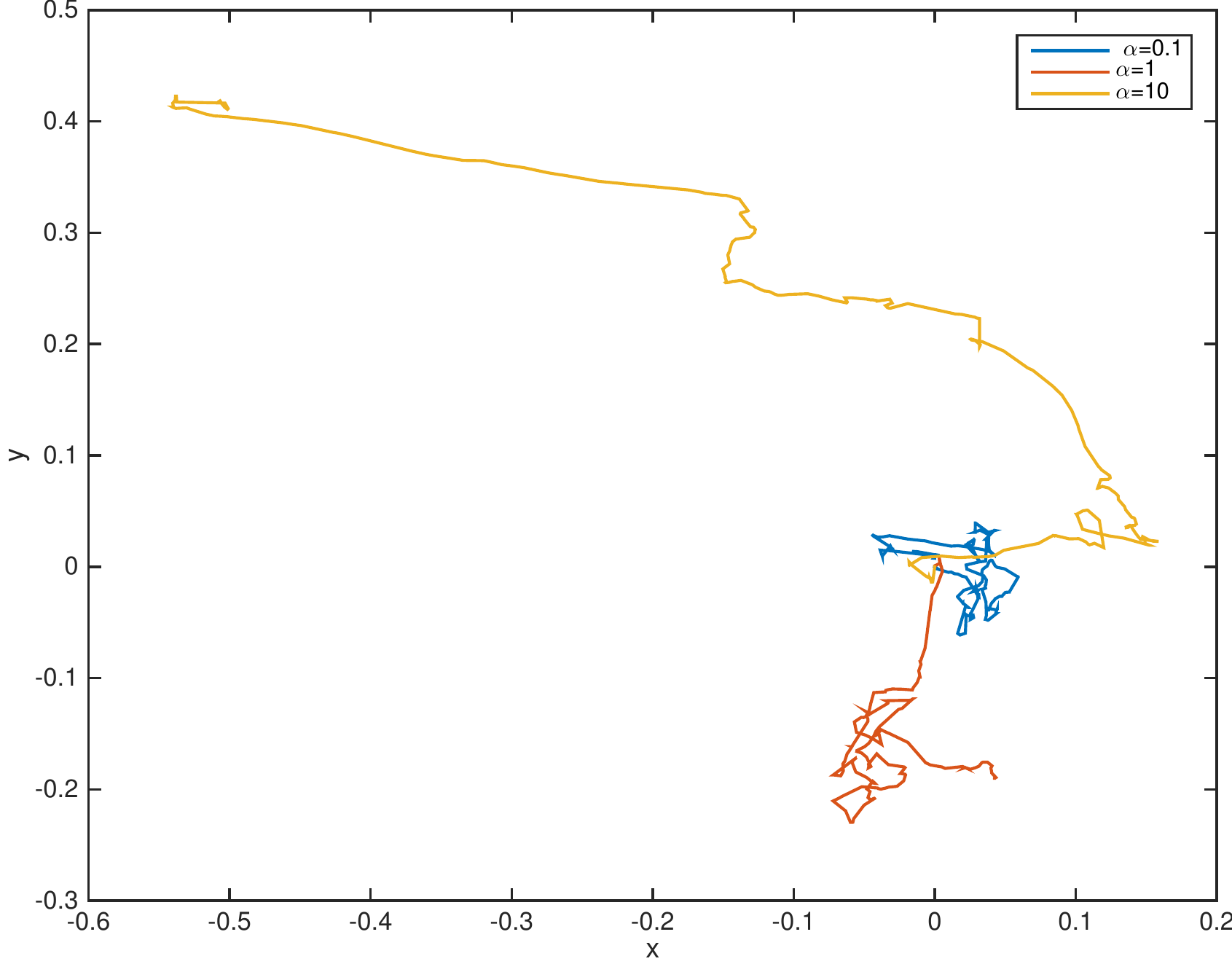}
\caption{Numerical trajectories obtained for a varying polarisation parameter $\alpha$. Parameters: $T=100$, $\Delta t=10^{-4}$, $c=d=1$, $r=0.95$, $\gamma=90$, $\mu=0.2$. Mutation concentration parameter $k=10$.}\label{fig:peu_mut}
\end{figure}

\begin{figure}[p]
\centering
\includegraphics[scale=0.5]{./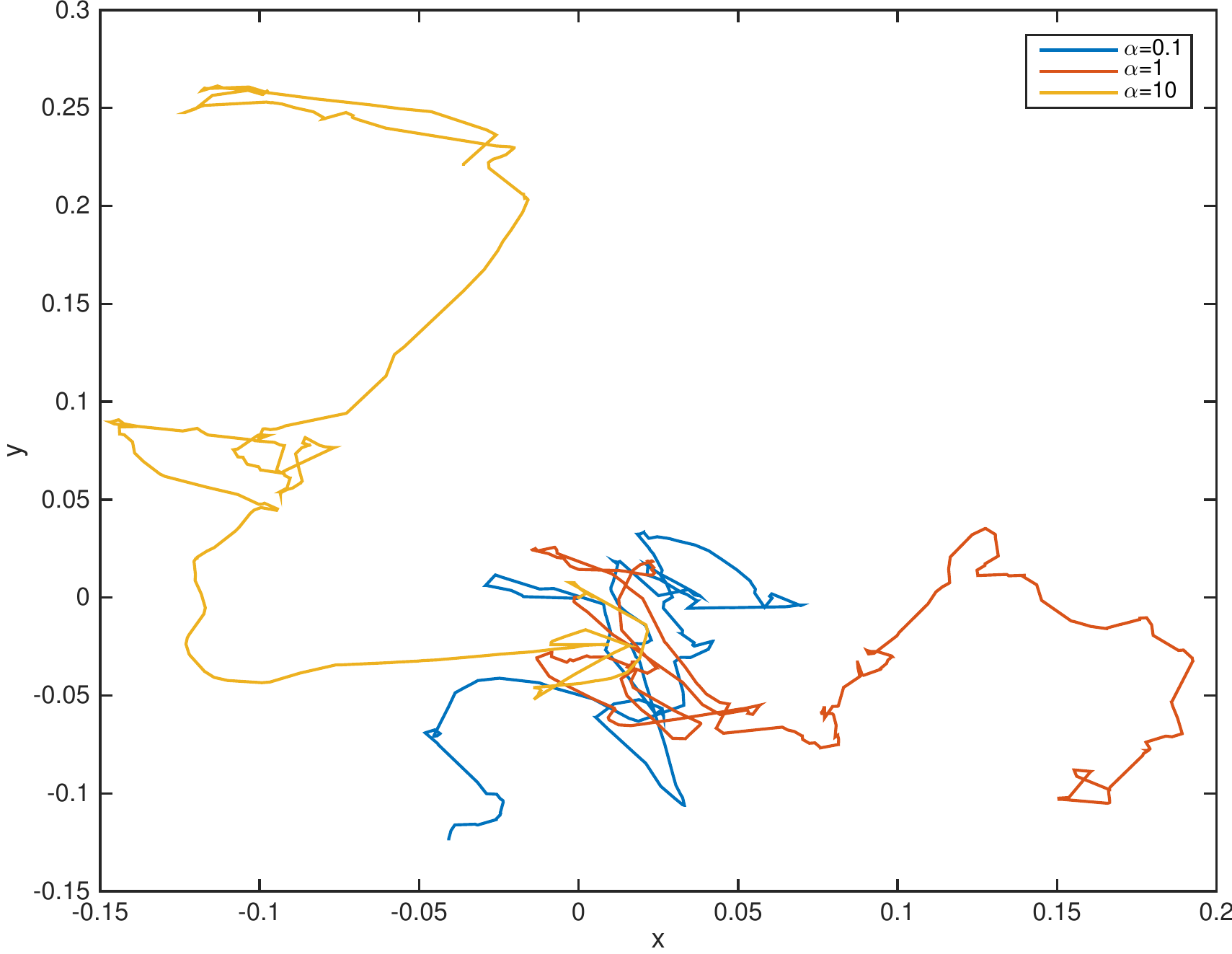}
\caption{Numerical trajectories obtained for a varying polarisation parameter $\alpha$. Parameters: $T=100$, $\Delta t=10^{-4}$, $c=d=1$, $r=0.95$, $\gamma=90$, $\mu=0.8$.}\label{fig:mut}
\end{figure}

\newpage
\bibliographystyle{apalike}
\bibliography{Proc_traj}

\begin{thebibliography}{}

\bibitem[Aman and Piotrowski, 2010]{morphogenesis}
Aman, A. and Piotrowski, T. (2010).
\newblock Cell migration during morphogenesis.
\newblock {\em Developmental biology}, 341(1):20--33.

\bibitem[Caballero et~al., 2014]{Caballero2014Protrusion-fluc}
Caballero, D., Voituriez, R., and Riveline, D. (2014).
\newblock Protrusion fluctuations direct cell motion.
\newblock {\em Biophys J}, 107(1):34--42.

\bibitem[Calvez et~al., 2012]{Siam_CHMV}
Calvez, V., Hawkins, R., Meunier, N., and Voituriez, R. (2012).
\newblock Analysis of a nonlocal model for spontaneous cell polarization.
\newblock {\em SIAM Journal on Applied Mathematics}, 72(2):594--622.

\bibitem[Fournier and M{\'e}l{\'e}ard, 2004]{fournier_microscopic_2004}
Fournier, N. and M{\'e}l{\'e}ard, S. (2004).
\newblock A microscopic probabilistic description of a locally regulated
  population and macroscopic approximations.
\newblock {\em The Annals of Applied Probability}, 14(4):1880--1919.

\bibitem[Friedl and Wolf, 2003]{Friedl2003Tumour-cell-inv}
Friedl, P. and Wolf, K. (2003).
\newblock Tumour-cell invasion and migration: diversity and escape mechanisms.
\newblock {\em Nat Rev Cancer}, 3(5):362--74.

\bibitem[Maiuri et~al., 2015]{Maiuri2015Actin_flows_med}
Maiuri, P., Rupprecht, J.-F., Wieser, S., Ruprecht, V., B{\'e}nichou, O.,
  Carpi, N., Coppey, M., De~Beco, S., Gov, N., Heisenberg, C.-P., Lage~Crespo,
  C., Lautenschlaeger, F., Le~Berre, M., Lennon-Dumenil, A.-M., Raab, M.,
  Thiam, H.-R., Piel, M., Sixt, M., and Voituriez, R. (2015).
\newblock Actin flows mediate a universal coupling between cell speed and cell
  persistence.
\newblock {\em Cell}, 161(2):374--86.

\bibitem[Mayor and Carmona-Fontaine, 2010]{mayor2010keeping}
Mayor, R. and Carmona-Fontaine, C. (2010).
\newblock Keeping in touch with contact inhibition of locomotion.
\newblock {\em Trends in cell biology}, 20(6):319--328.

\bibitem[Muller et~al., 2016]{PLOS}
Muller, N., Piel, M., Calvez, V., Voituriez, R., Goncalves-Sa, J., Guo, C.-L.,
  Jiang, X., Murray, A., and Meunier, N. (2016).
\newblock A predictive model for yeast cell polarization in pheromone
  gradients.
\newblock {\em PLoS Computational Biology}.

\bibitem[Romanczuk et~al., 2012]{Romanczuk2012Active-Brownian}
Romanczuk, P., B{\"a}r, M., Ebeling, W., Lindner, B., and Schimansky-Geier, L.
  (2012).
\newblock Active brownian particles. from individual to collective stochastic
  dynamics.
\newblock {\em Eur.Phys.J.}

\bibitem[Tanimoto and Sano, 2014]{tanimoto2014simple}
Tanimoto, H. and Sano, M. (2014).
\newblock A simple force-motion relation for migrating cells revealed by
  multipole analysis of traction stress.
\newblock {\em Biophysical journal}, 106(1):16--25.

\end{thebibliography}
\end{document}